\documentclass[lettersize,journal]{IEEEtran}
\usepackage{amsmath,amsfonts,dsfont}
\usepackage{algorithmic}
\usepackage{array}
\usepackage[caption=false,font=normalsize,labelfont=sf,textfont=sf]{subfig}
\usepackage{textcomp}
\usepackage{stfloats}
\usepackage{url}
\usepackage{verbatim}
\usepackage{graphicx}
\usepackage{xcolor}
\usepackage{caption}
\usepackage{amsthm}
\usepackage[ruled,vlined]{algorithm2e}
\usepackage{makecell}

\usepackage[numbers,sort&compress]{natbib}

\hyphenation{op-tical net-works semi-conduc-tor IEEE-Xplore}
\def\BibTeX{{\rm B\kern-.05em{\sc i\kern-.025em b}\kern-.08em
    T\kern-.1667em\lower.7ex\hbox{E}\kern-.125emX}}
\usepackage{balance}

\def\x{{\mathbf x}}
\def\X{{\mathbf X}}

\def\w{{\mathbf w}}
\def\z{{\mathbf z}}
\def\r{{\mathbf r}}
\def\d{{\mathbf d}}
\def\B{{\mathbf B}}
\def\Lam{{\mathbf \Lambda}}

\newcommand{\overbar}[1]{\mkern 1.5mu\overline{\mkern-1.5mu#1\mkern-1.5mu}\mkern 1.5mu}
\newcommand{\RN}[1]{%
  \textup{\uppercase\expandafter{\romannumeral#1}}%
}

\newtheorem{theorem}{Theorem}[section]







\begin{document}


\title{Minimax Concave Penalty Regularized Adaptive System Identification}
\author{Bowen Li, \emph{Student Member, IEEE}, Suya Wu, \emph{Student Member, IEEE}, Erin E. Tripp, Ali Pezeshki, \emph{Member, IEEE}, and Vahid Tarokh, \emph{Fellow IEEE}
\thanks{This work is supported in part by the AFOSR under award FA8750-20-2-0504. The work of E. E. Tripp was supported in part by AFOSR grant 21RICO035. Any opinions, findings and conclusions or recommendations expressed in this material are those of the authors and do not necessarily reflect the views of the U.S. Air Force Research Laboratory. Cleared for public release 13 September 2022: Case number AFRL-2022-4343.\\ 
\indent B. Li and A. Pezeshki are with the Department of Electrical and Computer Engineering, Fort Collins, CO 80523 USA. (email: bowenli@colostate.edu; Ali.Pezeshki@colostate.edu) \\
\indent S. Wu and V. Tarokh are with the Department of Electrical and Computer Engineering, Duke University, Durham, NC 27708 USA. (email: suya.wu@duke.edu; vahid.tarokh@duke.edu) \\
\indent E. E. Tripp is with Air Force Research Laboratory, Rome, NY 13441 USA. (email: erin.tripp.4@us.af.mil)

}}


\markboth{IEEE TRANSACTIONS ON SIGNAL PROCESSING}
{How to Use the IEEEtran \LaTeX \ Templates}

\maketitle

\begin{abstract}
We develop a recursive least squares (RLS) type algorithm with a minimax concave penalty (MCP) for adaptive identification of a sparse tap-weight vector that represents a communication channel. The proposed algorithm recursively yields its estimate of the tap-vector, from noisy streaming observations of a received signal, using expectation-maximization (EM) update. We prove the convergence to a local optimum of the static least squares version of our algorithm and provide bounds for the estimation error. We study the performance of the recursive version numerically. Using simulation studies of Rayleigh fading channel, Volterra system and multivariate time series model, we demonstrate that our recursive algorithm outperforms, in the mean-squared error (MSE) sense, the standard RLS and the $\ell_1$-regularized RLS.
\end{abstract}

\begin{IEEEkeywords}
Adaptive filtering, EM algorithm, minimax concave penalty (MCP), sparse system identification.
\end{IEEEkeywords}

\section{Introduction}
\IEEEPARstart{S}{\lowercase{ystem}} identification from streaming data arises in various applications, including echo cancellation~\cite{cui2004improved,naylor2006adaptive} and multipath channel communications~\cite{bruckstein2009sparse,li2014improved,bajwa2010compressed}. In a wide range of problems the finite impulse response to be estimated is sparse, but the nonzero tap weights and their support vary with time. Estimating such time-varying sparse vectors requires the development of suitable adaptive filtering algorithms with sparse regularization.  

Least mean squares (LMS) and recursive least squares (RLS) are two of the most widely used adaptive filtering algorithms. However, they do not enforce a sparse structure on their estimate of the system impulse response. To fill this gap, several variants of LMS and RLS type algorithms that use some form of sparse regularization have been proposed in the literature. Specifically, different regularization penalties, including $\ell_0$ \cite{gu2009l_,luo2020steady}, $\ell_1$\cite{chen2009sparse}, and $\ell_p$ norms \cite{li2014improved}, along with reweighting of the zero attractor \cite{chen2009sparse}, were incorporated into the quadratic cost function of LMS algorithm. However, LMS type algorithms in general exhibit slow convergence and are prone to producing poor tracking performance in time-varying environment \cite{haykin2008adaptive}. 


A sparse adaptive filtering algorithm based on RLS and $\ell_1$-norm regularization, called SPARLS, is proposed in \cite{SparseRLS}. This algorithm 
benefits from the competitive computational complexity of the expectation-maximization (EM) algorithm and has guaranteed convergence to a global optimum due to the convexity of $\ell_1$ norm. An online nonlinear version of SPARLS is developed in  \cite{kalouptsidis2011adaptive} and applied to estimating nonlinear multipath channels. Subsequently, \cite{han2017slants} proposed a SPARLS algorithm with group LASSO regularization to model multivariate nonlinear time series data.

However, the LASSO estimator is biased and, with $\ell_1$ regularization, it tends to over-shrink variables that it retains. Minimax concave penalty (MCP), introduced by Zhang in \cite{zhang2010nearly}, is an alternative nonconvex penalty that performs variable selection while maintaining unbiasedness to a good extent. In \cite{wang2018admm}, Wang \textit{et al.} employ MCP in a quadratic programming problem for sparse recovery and solve the problem using the alternating direction method of multipliers (ADMM) and iterative hard thresholding. Their MCP regularized sparse recovery exhibits better performance than various state-of-the-art sparse signal reconstruction methods. However, their framework does not apply to online adaptive system identification, with time-varying tap-weight vectors, from streaming data.    
 


In this paper, we first introduce an MCP-regularized least squares algorithm for adaptive sparse system identification. The proposed algorithm, which we refer to as SPALS-MCP, uses expectation-maximization (EM) update to produce an estimate of the tap-weight vector. We prove the almost sure convergence to a stationary point of the static least squares version of our algorithm and provide bounds for the estimation error. We then introduce a recursive version of the above algorithm, called SPARLS-MCP, and study its performance numerically. Using simulation of Rayleigh fading channel, Volterra system, and multivariate time series model, we demonstrate that our SPARLS-MCP algorithm outperforms, in the mean-squared error (MSE) sense, the standard RLS and the $\ell_1$-regularized RLS.



\section{Mathematical Preliminaries}
\subsection{Problem Formulation}
At instant time $i \; (i = 1,\cdots,n)$, $\x(i) = [x_{i-M+1}, \cdots, x_{i}]^{T}$ is the input signal vector of recent $M$ signals and $\w(i)=[w_{1,i}, \cdots, w_{M, i}]^{T}$ is the tap-weight vector of length $M$. The desired response, denoted by $d(i)$, is composed of the filter output $y(i)$ and noise $\epsilon(i)$. Specifically, 
$$
d(i) = y(i) + \epsilon(i) = \w^{H}(i) \x(i) + \epsilon(i)
$$
where $(.)^{H}$ represents the conjugate transpose and $\epsilon(i) \; (i=1,\cdots,n)$ are i.i.d Gaussian noises with variance $\sigma^2$. 

Define the instantaneous error at instant time $i$ of the filter as $e(i) := d(i)-\hat{y}(i) = d(i)-\hat{\w}^{H}(i)\x(i)$, where $\hat{\w}(i)$ is the estimated tap-weight vector at instant time $i$. At instant time $n$, the objective of adaptive algorithms is to minimize the weighted sum of square loss $\sum_{i=1}^{n} \lambda^{n-i}|e(i)|^2$, where $\lambda$ is a hyperparameter that referred to as the forgetting factor.

We assume that the unknown tap-weight vector $\w(n)$ is $k$-sparse and that both its support and non-zero values can vary over time. We henceforth introduce a regularization term for sparsity promotion, giving rise to minimizing 
\begin{equation}
\label{eq:ob_fc}
    \frac{1}{2\sigma^2}\sum_{i=1}^{n} \lambda^{n-i}(d(i)-\w^{H}(n)\x(i))^2+\rho(\w(n), \gamma)
\end{equation}
\sloppy In Eq. (\ref{eq:ob_fc}), the first term stands for the RLS cost, while the second term $\rho(\w(n), \gamma) = \sum_{j=1}^{M} \rho\left(\left|w_{j, n}\right|, \gamma\right)$ is a general penalty function dictated by a non-negative $\gamma$ \cite{shen2019structured}. Many functions are available to characterize the penalty. For example, Eq. (\ref{eq:ob_fc}) represents $\ell_0$ regularized RLS when $\rho(\w(n), \gamma) = \gamma \| \w(n)\|_{0}$. However, directly solving the $\ell_0$ norm regularized problem is computationally intractable due to the combinatorial nature of placing up to $k < M$ non-zero elements in $M$ dimension in $\w(n)$. To address this issue, the SPARLS algorithm with convex relaxation is proposed in \cite{SparseRLS} by setting $\rho(\w(n), \gamma) = \gamma \|\w(n)\|_{1}$. It solves an alternative convex optimization problem that is computationally feasible and results in sparse solutions. In the rest of paper, we denote $\ell_1$ regularized RLS as the SPARLS-$\ell_1$ algorithm. 

For notational convenience, we rewrite Eq. (\ref{eq:ob_fc}) into a compact form as: 
\begin{equation}
\label{eq:ob_fc_mat}
    \frac{1}{2\sigma^2}\| \Lam^{1/2}(n)\d(n)-\Lam^{1/2}(n)\X(n) \overbar{\w}(n)\|_{2}^{2}+\rho(\w(n), \gamma)
\end{equation}
where 
$$
\begin{aligned}
\Lam(n) & := \text{diag}(\lambda^{n-1},\lambda^{n-2},\cdots,1) \\ 
\d(n) & := [d(1),d(2),\cdots,d(n)]^{T} \\
\X(n) & := [\x(1),\x(2),\cdots,\x(n)]^{T}
\end{aligned}
$$ 
$\Lam^{1/2}(n)$ is a diagonal matrix with diagonal elements being $\sqrt{\Lam_{ii}(n)} \; (1\leq i \leq n)$ and $\overbar{\w}(n)$ is the conjugate of vector $\w(n)$. 

\subsection{Minimax Concave Penalty (MCP)}
In contrast to the convex relaxation of $\ell_1$ norm, the minimax concave penalty (MCP), first introduced for unbiased variable selection \cite{zhang2010nearly} and further generalized to a class of semiconvex sparsity promoting functions \cite{shen2019structured}, is a type of nonconvex constraints. 
Specifically, we have the scaled MCP that is defined by 
\begin{equation}
    \rho_{\alpha}(w)=|w|-\operatorname{env}_{\alpha}|w|= \begin{cases}|w|-\frac{1}{2 \alpha} w^{2}, & \;|w| \leq \alpha \\ 
    \frac{1}{2} \alpha, & \text { o.w. }\end{cases}
\end{equation} 
where \[ \operatorname{env}_{\alpha}|w| = \begin{cases} \frac{1}{2 \alpha} w^{2}, & \;|w| \leq \alpha \\ 
    |w| - \frac{1}{2}\alpha, & \text { o.w. }\end{cases} \]
is the Moreau envelope function and $\alpha$ is the parameter characterizing the envelope height \cite{shen2019structured}. 

\begin{figure}[h!]
    \centering
    \includegraphics[width=6cm]{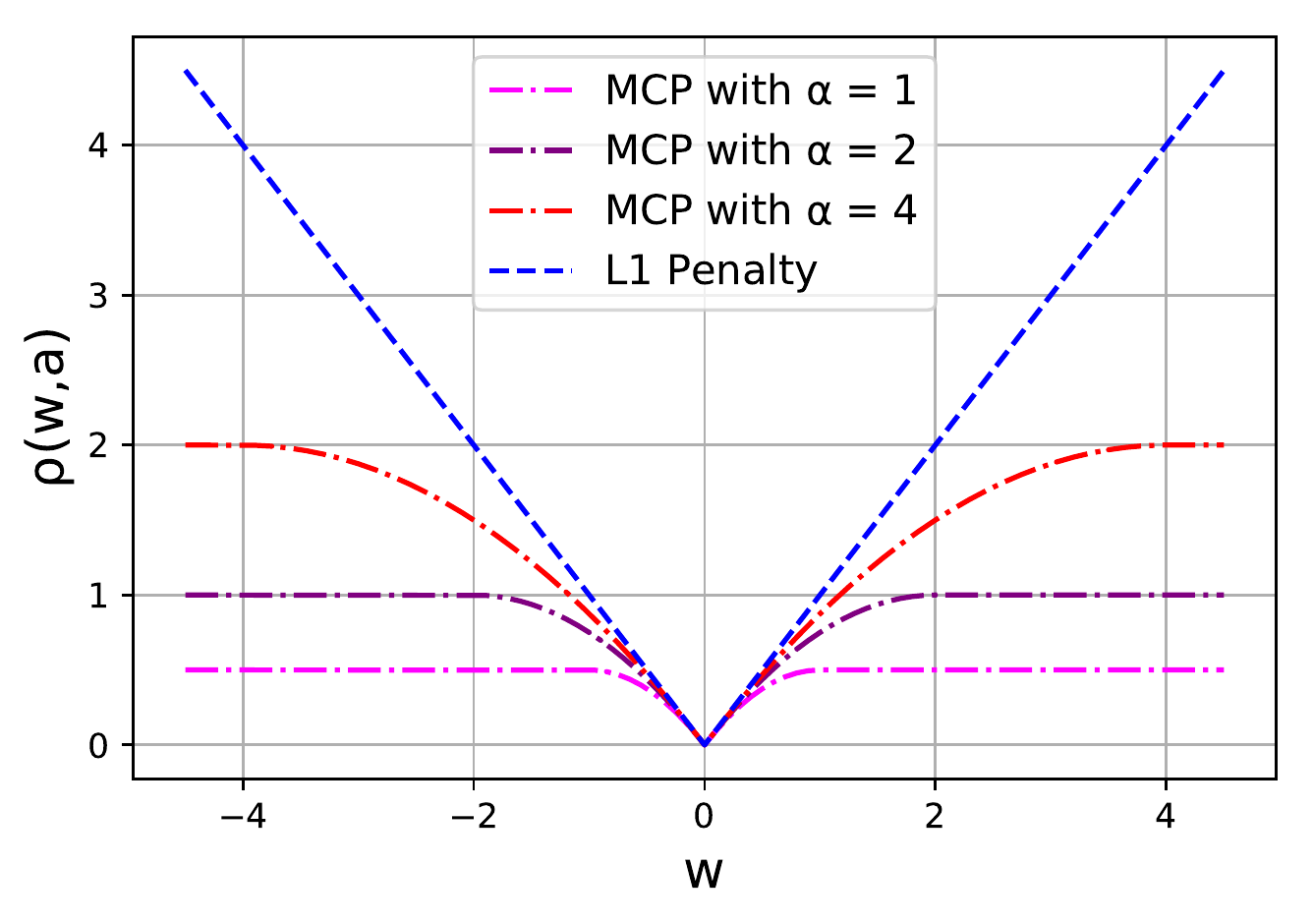}
    \caption{The Comparison between scaled MCP and $\ell_1$ Penalty} 
    \label{mcp_demo}
\end{figure}

Fig. \ref{mcp_demo} compares the shape of scaled MCP to $\ell_1$ penalty. It shows that MCP approximates $\ell_1$ penalty when $\alpha$ is arbitrarily large. In particular, MCP maintains the shape of the $\ell_1$ norm around the origin while approximating the scaled $\ell_0$ norm away from the origin. This structure leads to thresholding behavior in the iterative updates of the EM algorithm, which is captured by the proximal operator.

The proximal operator of scaled MCP regularized least squares is:
\begin{equation*}
\label{eq:mcp_prox}
    \operatorname{prox}_{\beta \text{,} \rho_\alpha}(r) := \arg\min_{w \in \mathbb{C}} ~\frac{1}{2\beta} (r - w)^{2} + \rho_{\alpha}(w)
\end{equation*}
The explicit solution of $\operatorname{prox}_{\beta \text{,} \rho_\alpha}(r)$ depends on the relative values of $\alpha$ and $\beta$. 

\noindent If $\beta<\alpha$, 
\begin{equation*}
\label{eq:sol1}
   \operatorname{prox}_{\beta \text{,} \rho_\alpha}(r) =\left\{
    \begin{array}{ll}
        0,   & |r| \leq \beta \\
        \frac{\alpha}{\alpha-\beta}(1-\frac{\beta}{|r|})r,   & \beta< |r| \leq \alpha \\
        r,   & |r| > \alpha
    \end{array}\right.
\end{equation*}
which is the firm thresholding operator. 

\noindent If $\beta = \alpha$,
\begin{equation*}
\label{eq:sol2}
    \operatorname{prox}_{\beta \text{,} \rho_\alpha}(r)=\left\{
    \begin{array}{lll}
        0,   & |r|<\alpha \\
        {[0, \alpha],}   & |r|=\alpha \\
        r,   & |r|>\alpha
    \end{array}\right.
\end{equation*}
where $[0, \alpha]$ represents a uniform selection from the values between $0$ and $\alpha$.

\noindent If $\beta>\alpha$, then the hard thresholding operator is obtained,
\begin{equation*}
\label{eq:sol3}
    \operatorname{prox}_{\beta \text{,} \rho_\alpha}(r)=\left\{
    \begin{array}{lll}
        0,   & |r|<\sqrt{\alpha \beta} \\
        \{0, r\},   & |r|=\sqrt{\alpha \beta} \\
        r,   & |r|>\sqrt{\alpha \beta}
    \end{array}\right.
\end{equation*} 
where $\{0, \alpha\}$ represents a uniform selection from one of $0$ and $\alpha$.

\begin{figure}[h!]
    \centering
    \includegraphics[width=0.5\textwidth]{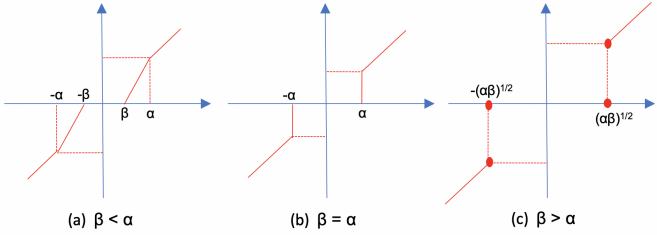}
    \caption{Proximal Operator of Scaled MCP} 
    \label{mcp_prox}
\end{figure}

Fig. \ref{mcp_prox} illustrates the proximal operators when $\beta<\alpha$, $\beta=\alpha$ and $\beta>\alpha$, respectively. This proximal operator can naturally be extended to the case with vector $\r \in \mathbb{C}^{M}$ in which we have:
\begin{equation*}
\label{eq:mcp_prox_vec}
    \begin{aligned}
    \operatorname{prox}_{\beta \text{,} \rho_\alpha}(\r) & := \min_{\w} \frac{1}{2\beta} \|\r - \w\|_{2}^{2} + \rho_{\alpha}(\w) \\
    & = \operatorname{prox}_{\beta \text{,} \rho_\alpha}(r_1) \times \cdots \times \operatorname{prox}_{\beta \text{,} \rho_\alpha}(r_M)
    \end{aligned}
\end{equation*}
where $\times$ is the Cartesian product operator. 

We then incorporate scaled MCP into the weighted least squares term to get the objective function for time $n$:
\begin{equation}
\label{eq:ob_fc_mcp}
    \min_{\w(n)} \frac{1}{2\beta}\| \Lam^{1/2}(n)\d(n)-\Lam^{1/2}(n)\X(n) \overbar{\w}(n)\|_{2}^{2}+\rho_{\alpha}(\w(n)) 
\end{equation}
where $\beta = \sigma^2\gamma$ is the corresponding penalty parameter that replaces $\gamma$ in Eq. (\ref{eq:ob_fc}) and let the regularization term $\rho_{\alpha}(\w,\gamma)$ become the scaled version $\rho_{\alpha}(\w)$. The time index of the objective function in (\ref{eq:ob_fc_mcp}) will have an increment of one after the new input and output signal arrive. The algorithm proposed in Sec. \RN{5} is designed to adaptively estimate the tap-weight vector at time $n$ based on the estimate from time $n-1$ and newly arrived signal. 

\section{Sparse Least Square with MCP}
The objective function in Eq. (\ref{eq:ob_fc_mcp}) involves a nonconvex penalty $\rho_{\alpha}(\w(n))$, which implies that Eq. (\ref{eq:ob_fc_mcp}) cannot be solved by convex optimization tools. 
We henceforth adopt a probabilistic approach with noise decomposition and maximum a posteriori (MAP) estimate produced by EM algorithm. This technique was first proposed for wavelet-based image restoration \cite{figueiredo2003algorithm}, and then applied to SPARLS-$\ell_1$ algorithm and its several variants \cite{SparseRLS,kalouptsidis2011adaptive,han2017slants}.

By setting up the normality assumption on desired signals $\d(n)$ and a proper prior on tap-weight vector $\w(n)$, we have $\d(n)=\X(n)\overbar{\w}(n) +\boldsymbol{\eta}(n)$ where $\boldsymbol{\eta}(n) \sim \mathcal{N}(\bold{0},\sigma^2\Lam^{-1}(n))$ and $p(\w(n))\propto \exp\{-\gamma\rho_{\alpha}(\w(n))\}$. Then the log of the posterior distribution of $\w(n)$ will be: 
\begin{equation}
\label{eq:w_post}
\begin{aligned}
    & \log\{p(\w(n)|\d(n))\} =  \log\{ \frac{p(\d(n)|\w(n)) p(\w(n)) }{p(\d(n))} \} \\ 
    & = \log\{ p(\d(n)|\w(n))\} + \log\{p(\w(n))\} -\log\{p(\d(n))\} \\
    & = -\frac{1}{2\sigma^2}\| \Lambda^{1/2}(n)\d(n)-\Lambda^{1/2}(n)\X(n) \overbar{\w}(n)\|_{2}^{2} \\
    & \hspace{4mm} - \gamma\rho_{\alpha}(\w(n)) + C,
\end{aligned}
\end{equation}
where $C$ accounts for the constant terms (relative to $\w(n)$). Hence, finding the maximum a posteriori (MAP) estimate of $\w(n)$ under our assumptions is equivalent to solving Eq. (\ref{eq:ob_fc_mcp}).

To maximize the log likelihood minus the regularization term, i.e.,
\begin{equation}
\label{eq:ob_map}
    \max_{\w(n)}\log p(\d(n)|\w(n)) -\gamma\rho_{\alpha}(\w(n))
\end{equation}
we first decompose the output signal into two parts, such that 
\begin{equation*}
    \begin{aligned}
        \z(n) & = \overbar{\w}(n) + \xi\boldsymbol{\eta}_{1}(n) \\
        \d(n) & = \X(n)\z(n)+ \Lam^{-1/2}(n)\boldsymbol{\eta}_{2}(n)
    \end{aligned}
\end{equation*}
where $\boldsymbol{\eta}_{1}(n) \sim \mathcal{N}(\bold{0},\bold{I}_{n})$ and $\boldsymbol{\eta}_{2}(n) \sim \mathcal{N}(\bold{0},\sigma^{2}\bold{I}_{n}-\xi^{2}\Lam^{1/2}(n)\X(n)\X^{H}(n)\Lam^{1/2}(n))$ are independent with each other. 
Next, we treat $\z(n)$ as the latent data and use EM algorithm to find the maximizer of penalized log-likelihood in Eq. (\ref{eq:ob_map}). Notice that the covariance matrix $\sigma^{2}\bold{I}_{n}-\xi^{2}\Lam^{1/2}(n)\X(n)\X^{H}(n)\Lam^{1/2}(n)$ needs to be semi-positive definite. Hence, we should have $\xi^{2} \leq \sigma^2/\lambda_1$ where $\lambda_1$ is the largest eigenvalue of $\Lam^{1/2}(n)\X(n)\X^{H}(n)\Lam^{1/2}(n)$. 

The $k^{\text{th}}$ iteration of EM algorithm at instant time $n$ is: \\

\textbf{E-step}: Calculate the expected log-likelihood given the previous estimate $\hat{\w}^{(k-1)}(n)$,
\begin{equation}
    \label{em-e}
    \begin{aligned}
    Q(\w(n)|\hat{\w}^{(k-1)}(n)) & = \mathbb{E}_{\z|\d,\hat{\w}^{(k-1)}(n)}\left[ \log p(\d(n),\z(n) | \w(n)) \right] \\
    & = - \frac{1}{2\xi^2}||\r(n)-\w(n)||_2^2
    \end{aligned}
\end{equation}
where $\r(n) = \left(\mathbf{I}-\frac{\xi^{2}}{\sigma^{2}} \X^{H}(n) \Lam(n) \X(n)\right) \hat{\w}^{(k-1)}(n) +\frac{\xi^{2}}{\sigma^{2}} \X^{H}(n)\Lam(n)\d(n)$. \\
    
\textbf{M-step}: Find the next $\w^{(k)}(n)$ that maximizes $Q(\w(n)|\hat{\w}^{(k-1)}(n))$ with penalty
\begin{equation}
    \label{em-m}
    \w^{(k)}(n) = \arg\max_{\w(n)} Q(\w(n)|\hat{\w}^{(k-1)}(n)) -\gamma \rho_{\alpha}(\w(n))
\end{equation}
    
The tap-weight vector $\w(n)$ may have ungrouped or grouped structure depending on the specific applications. The detailed algorithms to solve Eq. (4) at instant time $n$ for ungrouped and grouped version are as follows: 

\subsection{Ungrouped Version}
\begin{algorithm}

\caption{Sparse Least Squares with MCP (SPALS-MCP)}
\SetAlgoLined
    
 \textbf{Input}: $\X(n),\d(n),\Lam(n),\xi^{2}, \hat{\w}^{(0)}(n)$.\\
 ~\\
 \textbf{Calculate intermediate quantities:} \\
 $\B(n) = \mathbf{I}-(\xi^2/\sigma^2)\X^{H}(n) \Lam(n) \X(n)$ \\
 $\boldsymbol{\mu}(n) = (\xi^2/\sigma^2) \X^{H}(n)\Lam(n)\d(n)$ \\
 ~\\
 \textbf{Perform EM update:} \\
 \textbf{For $k$ in $1:K$:} \\ 
 \hspace{3mm} \textbf{E-step}: \\
 \hspace{3mm} $Q(\bold{w}|\hat{\bold{w}}^{(k-1)}(n)) = (-1/2\xi^2)\|\r(n) - \w\|_{2}^{2}$ \\
 \hspace{3mm} where $\r(n) = \B(n) \hat{\w}^{(k-1)}(n) + \boldsymbol{\mu}(n)$ \\

 \hspace{3mm} \textbf{M-step}: \\
\hspace{3mm} $
 \hat{\bold{w}}^{(k)}(n) = \arg\max_{\bold{w}} Q(\bold{w}|\hat{\bold{w}}^{(k-1)}(n)) - \gamma \rho_{\alpha}(\w) $ \\ \hspace{15.7mm} $= \operatorname{prox}_{\xi^{2}\gamma \text{,} \rho_\alpha}(\bold{r})
 $\\
 ~\\
 $\hat{\w}(n) = \hat{\w}^{(K)}(n)$ \\
 ~\\
 \textbf{Output}: Return $\hat{\bold{w}}(n)$

\end{algorithm}
Algorithm 1 presents the detailed steps for ungrouped version. 
$\xi^2$ is a predetermined hyper-parameter specified above. $\hat{\w}^{(0)}(n)$ is an initialization of $\w(n)$. The proximal operator $\operatorname{prox}_{\xi^{2}\gamma \text{,} \rho_\alpha}(\bold{r})$ follows what we have shown in Sec. \RN{2}-B.

\subsection{Grouped Version}
When the tap-weight vector exhibits group structure, we need to use group MCP for the regularization term. It is assumed that we have prior knowledge of group information. 
The group MCP penalty is obtained by setting 
\begin{equation*}
\label{eq:group_mcp}
    \rho(\w,\gamma) = \gamma \sum_{l=1}^{L}\rho_{\alpha}(\|\w_{l}\|_{2})
\end{equation*}
where $L$ is the number of groups and $\w_{l}$ is the $l^{\text{th}}$ group in tap-weight vector $\w$, i.e., $\w^{T} = (\w_{1}^{T},\cdots,\w_{L}^{T})$ and $\r^{T} = (\r_{1}^{T},\cdots,\r_{L}^{T})$

For the algorithm part, the only difference from Algorithm 1 is to replace the operation in M-step with group version proximal operator:
\begin{equation*}
\label{eq:prox_group_mcp}
    \hat{\w}_{l} = \left\{
    \begin{array}{lll}
        \frac{\alpha}{\alpha-\gamma\xi^2}(1-\frac{\gamma\xi^2}{\|\r_{l}\|_{2}})_{+}\r_{l},   & \|\r_l\|_{2} \leq \alpha \\
        \r_{l},   & \|\r_{l}\|_{2} > \alpha
    \end{array}\right. \; (1\leq l \leq L)
\end{equation*}
The derivation of this group version proximal operator is shown in Appendix A. 

\section{Analysis of Algorithm}
\subsection{Convergence}

\begin{theorem}
For a given time index $i$, the SPALS-MCP algorithm converges to a stationary point of the cost function in Eq. (\ref{eq:ob_fc_mcp}) if the firm thresholding is used. 
\end{theorem}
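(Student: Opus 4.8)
The plan is to view the SPALS-MCP iteration as a majorization--minimization (EM) scheme for the penalized log-likelihood $L(\w):=\log p(\d(n)\mid\w)-\gamma\rho_\alpha(\w)$, whose maximization is equivalent to minimizing the cost in Eq.~(\ref{eq:ob_fc_mcp}) (established around Eq.~(\ref{eq:w_post}), with $\beta=\sigma^2\gamma$), and then to proceed in four steps: (i) show the E-step surrogate is a tangent minorizer of $L$; (ii) use the firm-thresholding hypothesis to make the M-step a strongly convex, single-valued, continuous update with a sufficient-ascent guarantee; (iii) deduce monotonicity, $\|\hat\w^{(k)}(n)-\hat\w^{(k-1)}(n)\|_2\to 0$, and precompactness of the iterates; (iv) identify every limit point with a stationary point of Eq.~(\ref{eq:ob_fc_mcp}).

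For (i), use the standard EM identity $\log p(\d(n)\mid\w)=Q(\w\mid\w')+H(\w\mid\w')$ with $H(\w\mid\w'):=-\mathbb{E}_{\z\mid\d,\w'}[\log p(\z(n)\mid\d(n),\w)]$; Gibbs' inequality gives $H(\w\mid\w')\ge H(\w'\mid\w')$, so $\w\mapsto Q(\w\mid\w')-\gamma\rho_\alpha(\w)+H(\w'\mid\w')$ minorizes $L$ and touches it at $\w'$. For (ii), the M-step minimizes $\w\mapsto \tfrac{1}{2\xi^2}\|\r(n)-\w\|_2^2+\gamma\rho_\alpha(\w)$; since $\rho_\alpha$ is semiconvex with modulus $1/\alpha$, this objective is strongly convex with modulus $\tfrac{1}{\xi^2}-\tfrac{\gamma}{\alpha}$, which is strictly positive \emph{exactly} when $\xi^2\gamma<\alpha$, i.e.\ in the firm-thresholding regime. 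Strong convexity makes the update $\hat\w^{(k)}(n)=\operatorname{prox}_{\xi^{2}\gamma\text{,}\,\rho_\alpha}\!\big(\B(n)\hat\w^{(k-1)}(n)+\boldsymbol\mu(n)\big)$ single-valued, and continuous because firm thresholding is a continuous map composed with an affine one; it also supplies a quadratic-growth bound which, combined with $H$-monotonicity, yields the sufficient ascent
$$L(\hat\w^{(k)}(n))-L(\hat\w^{(k-1)}(n))\;\ge\;\tfrac12\Big(\tfrac{1}{\xi^2}-\tfrac{\gamma}{\alpha}\Big)\big\|\hat\w^{(k)}(n)-\hat\w^{(k-1)}(n)\big\|_2^2.$$

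For (iii), $L$ is bounded above (the Gaussian log-likelihood has a finite supremum and $-\gamma\rho_\alpha\le 0$), so $L(\hat\w^{(k)}(n))$ increases to a finite limit and telescoping the sufficient-ascent bound forces $\|\hat\w^{(k)}(n)-\hat\w^{(k-1)}(n)\|_2\to 0$. Under the standing assumption that $\X(n)$ has full column rank, the quadratic data term and hence $-L$ is coercive, so all iterates remain in a compact sublevel set; combined with $\|\hat\w^{(k)}(n)-\hat\w^{(k-1)}(n)\|_2\to 0$ and continuity of the update map, every subsequential limit $\w^\star$ is a fixed point of the update. For (iv), Fermat's rule at such a fixed point reads $0\in \tfrac{1}{\xi^2}(\w^\star-\r)+\gamma\,\partial\rho_\alpha(\w^\star)$ with $\r=\B(n)\w^\star+\boldsymbol\mu(n)$; inserting $\mathbf{I}-\B(n)=(\xi^2/\sigma^2)\X^{H}(n)\Lam(n)\X(n)$ and $\boldsymbol\mu(n)=(\xi^2/\sigma^2)\X^{H}(n)\Lam(n)\d(n)$ collapses it to
$$0\in \tfrac{1}{\sigma^2}\X^{H}(n)\Lam(n)\big(\X(n)\w^\star-\d(n)\big)+\gamma\,\partial\rho_\alpha(\w^\star),$$
which, with $\beta=\sigma^2\gamma$ and modulo the standard complex-conjugation convention, is exactly the first-order stationarity condition for the cost in Eq.~(\ref{eq:ob_fc_mcp}).

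The hard part is upgrading ``every limit point is stationary'' to convergence of the entire sequence $\{\hat\w^{(k)}(n)\}$ to a single stationary point: the vanishing increments make the limit set compact and connected, but excluding drift along it needs more. I would close this gap either by assuming the stationary points of Eq.~(\ref{eq:ob_fc_mcp}) are isolated or, preferably, by invoking the Kurdyka--Lojasiewicz property, which applies because the cost is semialgebraic (a polynomial plus the piecewise-polynomial MCP) and for which the sufficient-ascent inequality above is exactly the descent ingredient a KL argument needs. A secondary technical point is that in the complex setting ``stationary point'', $\partial\rho_\alpha$ and the optimality condition should be interpreted via Wirtinger calculus, or by splitting real and imaginary parts.
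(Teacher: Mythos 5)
Your proposal is correct in substance, but it takes a genuinely different --- and far more self-contained --- route than the paper. The paper's proof is essentially a verification of hypotheses: it notes that $Q(\cdot\mid\cdot)$, $\rho_\alpha$, and the firm-thresholding operator are continuous, and then invokes Theorem~2 of Wu (1983) on EM convergence. You instead rebuild the argument as a majorization--minimization scheme: the Gibbs-inequality minorization of the penalized log-likelihood, the observation that the firm-thresholding regime $\xi^2\gamma<\alpha$ is exactly where the M-step surrogate $\tfrac{1}{2\xi^2}\|\r(n)-\w\|_2^2+\gamma\rho_\alpha(\w)$ is strongly convex (MCP being weakly convex with modulus $1/\alpha$), the resulting sufficient-ascent inequality, vanishing increments, and identification of fixed points of the update map with stationary points of Eq.~(4) via Fermat's rule. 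What your route buys: it quantifies the per-iteration improvement (which dovetails with the contraction constant the paper later uses in its EM-error analysis), it pinpoints where firm thresholding actually enters (single-valuedness and continuity of the prox plus the margin $1/\xi^2-\gamma/\alpha>0$), it makes explicit a coercivity/compactness hypothesis (full column rank of $\Lam^{1/2}(n)\X(n)$) that Wu's theorem also requires (compact level sets) but that the paper never verifies, and --- via the Kurdyka--\L{}ojasiewicz property --- it can deliver convergence of the entire iterate sequence, which is strictly more than the cited Wu Theorem~2 provides: that theorem only guarantees that every limit point is stationary and that the objective values converge, so the paper's stated conclusion is itself a somewhat generous reading of its citation, and the ``hard part'' you flag is a real issue for the paper's proof as much as for yours. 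What the paper's route buys is brevity and reliance on classical EM theory. Two points to tighten on your side: a KL argument needs, in addition to sufficient ascent, a relative-error bound of the form $\operatorname{dist}\bigl(0,\partial F(\hat{\w}^{(k)}(n))\bigr)\le b\,\|\hat{\w}^{(k)}(n)-\hat{\w}^{(k-1)}(n)\|_2$, which here follows routinely from the M-step optimality condition together with the affine form of $\mathcal{E}$ but should be stated; and the complex-valued stationarity conditions should indeed be phrased via Wirtinger calculus or a real--imaginary split, as you note.
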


\begin{proof}
In the EM update of SPALS-MCP algorithm, both $Q(\w|\hat{\w}^{(k)}(i))$ and $\rho_{\alpha}(\w)$ are continuous with respect to $\w$ and $\hat{\w}^{(k)}(i)$. The firm thresholding operator employed in M-step is also continuous with respect to $\w$. Theorem 2 in \cite{wu1983convergence} shows that the EM algorithm converges to a stationary point of the cost function in Eq. (\ref{eq:ob_fc_mcp}) or (\ref{eq:w_post}) based on the above continuities. 
\end{proof}

\indent \textit{Remark 1:} 
The proximal operator in the EM update will be discontinuous at particular points if hard thresholding is used. However, we only have finitely many single point discontinuities if a finite number of EM iterations are implemented at each time instant. Hence, SPALS-MCP algorithm converges almost surely to a stationary point of the cost function in Eq. (\ref{eq:ob_fc_mcp}) when hard thresholding is used.  \\

\indent \textit{Remark 2:} We note that convergence of a class of iterative shrinkage/thresholding algorithms (ISTA), involving proximal operators of semi-convex regularizers, has been studied in \cite{kowalski2014thresholding}. Specifically, \cite{kowalski2014thresholding} shows that ISTA converges to a stationary point of a loss function, with a semi-convex regularizer, if the unregularized loss is $L$-Lipschitz differentiable. Accelerated algorithms for a broader class of problems that include nonconvex and nonsmooth penalties have been proposed in \cite{li2015accelerated}. Although MCP is in the class of regularizers considered in the above papers, our iterative algorithm and convergence analysis are different from those in \cite{kowalski2014thresholding,li2015accelerated}. We use the EM algorithm (instead of ISTA), and our convergence analysis follows that of the EM algorithm, which relies on the continuity of the expected log-likelihood instead of Lipschitz differentiability. 

\subsection{Estimation Error}
Define the estimation error of our tracking algorithm at instant time $n$ as:
\begin{equation*}
    \tilde{e}(n) := \|\hat{\w}(n)-\w(n)\|_{2}
\end{equation*}
We can decompose the expected instantaneous error into two parts by triangle inequality:
$$
\begin{aligned}
    \tilde{e}(n) & =  \| \hat{\w}(n)-\tilde{\w}(n) + \tilde{\w}(n) -\w(n) \|_{2} \\
    & \leq  \| \hat{\w}(n)-\tilde{\w}(n)\|_{2} + \|\tilde{\w}(n) -\w(n)\|_{2} \\ 
\end{aligned}
$$
where $\tilde{\w}(n)$ is one of the stationary point of the objective in Eq. (\ref{eq:ob_fc_mcp}). Hence, $\| \hat{\w}(n)-\tilde{\w}(n)\|_{2}$ is the error produced by EM step in our tracking algorithm, i.e., \textit{EM error}. $\| \tilde{\w}(n) - \w(n)\|_{2}$ is the error produced by the convex or nonconvex relaxation, i.e., \textit{relaxation error}. \\
~\\
\noindent \textbf{EM Error:} The EM update in SPALS-MCP can be considered as a composite of $\mathcal{E}: \mathbb{C}^{M} \xrightarrow{} \mathbb{C}^{M}$ and $\mathcal{T}: \mathbb{C}^{M} \xrightarrow{} \mathbb{C}^{M}$ as follows: 
\begin{equation*}
    \mathcal{E}(\w) := (\bold{I}-(\xi^2 /\sigma^2)\X^{H}\Lam\X)\w +  (\xi^2 /\sigma^2)\X^{H}\Lam\d
\end{equation*}
\begin{equation*}
    \mathcal{T}(\w) := [t_1,\cdots,t_M]^{T}
\end{equation*}
with $t_{i} := \operatorname{prox}_{\sigma^2\gamma \text{,} \rho_\alpha}(w_{i})$ defined in section \RN{2}-B. Then, $\mathcal{M}: \mathbb{C}^{M} \xrightarrow{} \mathbb{C}^{M}$ as a composite mapping of applying EM algorithm is written as:
\begin{equation*}
    \mathcal{M}(\w) = \mathcal{T}(\mathcal{E}(\w))
\end{equation*}
The mapping $\mathcal{E}$ is differentiable while $\mathcal{T}$ is non-differentiable. We can define the subdifferential (also superdifferential) for mapping $\mathcal{T}$ as: 
\begin{equation*}
    \partial\mathcal{T}(\w) := \text{diag}(s_1,\cdots,s_M),
\end{equation*}
where for firm thresholding, i.e., $\sigma^{2}\gamma < \alpha$, 
\begin{equation*}
    s_i := \left\{
    \begin{tabular} {@{\enspace}l} 
      1  \hspace{33.9mm}$|w_{i}| > \alpha$ \\ 
      $1 \leq s_i \leq \frac{\alpha} {\alpha-\sigma^2\gamma}$ \hspace{11mm} $|w_{i}| = \alpha$\\
      $\frac{\alpha}{\alpha-\sigma^2\gamma}$ \hspace{14.5mm} $\sigma^{2}\gamma<|w_i|<\alpha$ \\
      $0 \leq s_{i} \leq \frac{\alpha}{\alpha-\sigma^2\gamma}$ \hspace{11mm} $|w_{i}| = \sigma^{2}\gamma$\\
      0 \hspace{32.8mm} $|w_{i}| < \sigma^{2}\gamma$
    \end{tabular}
    \right. 
\end{equation*}

\noindent and for hard thresholding, i.e., $\sigma^{2}\gamma = \alpha$ and $\sigma^{2}\gamma > \alpha$,
\begin{equation*}
    s_i := \left\{
    \begin{tabular} {@{\enspace}l}
      1 \hspace{20mm} $|w_{i}|>\sqrt{\alpha\sigma^{2}\gamma}$ \\ 
      $0 \leq s_{i} \leq 1$ \hspace{6mm} $|w_{i}|=\sqrt{\alpha\sigma^{2}\gamma}$\\ 
      0 \hspace{20mm} $|w_{i}|<\sqrt{\alpha\sigma^{2}\gamma}$.
    \end{tabular}
    \right.
\end{equation*}
From the chain rule of subdifferential \cite{boyd2004convex}, we obtain:
\begin{equation*}
    \partial \mathcal{M}(\w) = \partial(\mathcal{T}\circ \mathcal{E}(\w)) = \partial \mathcal{T}(\mathcal{E}(\w))^{H}(\bold{I}-(\xi^2 /\sigma^2)\X^{H}\Lam\X)
\end{equation*}
~\\
\indent Denote $\rho_{\text{min}}(n)$ as the smallest eigenvalue of $\X^{H}(n)\Lam(n) \X(n)$.
If firm thresholding is used in $\mathcal{T}$, the composite mapping $\mathcal{M}$ will be Lipschitz continuous, i.e., 
\begin{align}
\label{lipschitz_ineq_1}
    \| \mathcal{M}(\hat{\w}^{(k)}(n)) - \mathcal{M}(\tilde{\w}(n)) \|_{2} \leq
    C \| \hat{\w}^{(k)}(n) - \tilde{\w}(n) \|_{2},
\end{align}
where 
\begin{equation*}
    C = \|\partial\mathcal{M}(\w)\|_{2} = \left[\alpha/(\alpha-\sigma^2\gamma)\right] \left[1-(\xi^2/\sigma^2)\rho_{\text{min}}(n) \right ].
\end{equation*}

At the beginning of stage $k+1$ of the EM algorithm, we have $\mathcal{M}(\hat{\w}^{(k)}(n)) = \hat{\w}^{(k+1)}(n)$. Since $\tilde{\w}(n)$ is a stationary point of the objective function in Eq. (\ref{eq:ob_fc_mcp}), we have $\mathcal{M}(\tilde{\w}(n)) = \tilde{\w}(n)$. Hence, the inequality in (\ref{lipschitz_ineq_1}) becomes: 
\begin{align}
\label{lipschitz_ineq_2}
    \|\hat{\w}^{(k+1)}(n) - \tilde{\w}(n)\|_{2} \leq C \| \hat{\w}^{(k)}(n) - \tilde{\w}(n) \|_{2}. 
\end{align}
After $K$ EM iterations, we have:
\begin{align}
\label{EM_update_bound}
    \|\hat{\w}^{(K)}(n) - \tilde{\w}(n)\|_{2} \leq
    C^{K} \| \hat{\w}^{(0)}(n) - \tilde{\w}(n) \|_{2}. 
\end{align}
If $C < 1$, the EM error will be sufficiently small for large $K$. The value of $C$ can be controlled by proper selection of values $\alpha$ and $\gamma$. \\

\textit{Remark 3:} If hard thresholding is used in $\mathcal{T}$, the composite mapping $\mathcal{M}$ will not be Lipschitz continuous and the above derivation does not apply. The simulation examples in Section. \RN{6} suggest that firm thresholding gives better performance than hard thresholding.

~\\
\noindent \textbf{Relaxation Error:} In the SPALS-MCP algorithm, the limit points of EM update end up with stationary points due to the continuity and non-convexity of the penalized log-likelihood function in Eq. (\ref{eq:ob_map}). We would like to obtain an upper bound of the distance between the stationary points and the global optimal, i.e., $\|\tilde{\w}(n)-\w(n)\|_{2}$. 

A class of coordinate-separable regularizers, called \textit{amenable regularizers}, was proposed by Loh and Wainright in \cite{loh2017statistical}. The MCP regularizer is a member of this class. By imposing the restricted strong convexity (RSC) condition and additional parameter constraints, a upper bound of the relaxation error $\|\tilde{\w}(n)-\w(n)\|_{2}$ can be obtained \cite{JMLR:v16:loh15a}.

The optimization problem in Eq. (\ref{eq:ob_fc_mcp}) and (\ref{eq:w_post}) is equivalent to:
\begin{equation*}
\label{eq:ob_fc_mcp_equi}
\min_{\w(n)} \frac{1}{2n}\| \Lam^{1/2}(n)\d(n)-\Lam^{1/2}(n)\X(n) \w(n)\|_{2}^{2} + \frac{\sigma^2 \gamma}{n}\rho_{\alpha}(\w(n))
\end{equation*}

\noindent Define the squared loss from above as:
\begin{equation*}
\label{eq:def_Ln}
    \mathcal{L}_{n}(\w(n)) := \frac{1}{2n}\| \Lam^{1/2}(n)\d(n)-\Lam^{1/2}(n)\X(n) \w(n)\|_{2}^{2}
\end{equation*}

\noindent $\mathcal{L}_{n}(\w(n))$ satisfies the RSC condition if 
$$
\begin{aligned}
    \langle \ \nabla \mathcal{L}_{n}(\w+\Delta) & - \nabla(\w) ,\Delta \rangle \  = \frac{1}{n} \|\Lam^{1/2}(n)\X(n) \Delta\|_{2}^{2} \\ 
    & \geq 
    \begin{cases} \alpha_{1}\|\Delta\|^{2}_{2} - \tau_{1}\frac{\log p}{n} \|\Delta\|_{1}^{2}, & \;\forall \|\Delta\|_{2} \leq 1 \\ 
    \alpha_{2}\|\Delta\|_{2} - \tau_{2}\sqrt{\frac{\log p}{n}} \|\Delta\|_{1}, & \; \forall \|\Delta\|_{2} \geq 1\end{cases}
\end{aligned}
$$
where $\alpha_{1}, \alpha_{2} > 0$ and $\tau_{1}, \tau_{2} \geq 0$ are constants. 

In our case, 
\begin{equation*}
    \nabla^{2} \mathcal{L}_{n}(\w) = \frac{1}{n}\X^{H}(n)\Lam(n)\X(n)
\end{equation*}
which is positive definite with probability 1 if each row of $\X(n)$ is i.i.d Gaussian sequence. This implies that $\mathcal{L}_{n}(n)$ is strongly convex with probability 1 and further leads to the conclusion that $\tau_{1} = \tau_{2} = 0$ and $\alpha_{1} = \alpha_{2} >0$ \cite{JMLR:v16:loh15a}.

From Theorem 1 in \cite{JMLR:v16:loh15a}, if $n$ is large enough and $\mathcal{L}_{n}$ satisfies RSC with $3\alpha < 4\alpha_{1}$ and  $\sigma^{2}\gamma \geq 4n \|\nabla \mathcal{L}_{n}(\w) \|_{\infty} = 4 \| \X^{H}(n)\Lam^{1/2}(n)\boldsymbol{\epsilon}(n) \|_{\infty}$, we then obtain the bound as follows:
$$
\begin{aligned}
    \|\tilde{\w}(n) - \w(n)\|_{2} & \leq \frac{6\sigma^2\gamma\sqrt{s}}{n(4\alpha_{1}-3\alpha)} \\
    & = \frac{6\sqrt{s}}{n(4\alpha_{1}-3\alpha)}4\|\X^{H}(n)\Lam^{1/2}(n)\boldsymbol{\epsilon}(n)\|_{\infty}
\end{aligned}
$$
with $s = \|\w(n)\|_{0}$. \\

\noindent \textbf{Bound for Estimation Error for SPALS-MCP:} Combining the results from EM and relaxation error, we can obtain the upper bound of estimation error. The following theorem states the bounding results for large enough number of measurements $n$ and appropriate choices of $\gamma$ and $\alpha$.
\begin{theorem}
Denote the estimation error of SPALS-MCP algorithm at instant time $n$ as $\tilde{e}_{\text{MCP}}(n)$, the largest and smallest eigenvalue of $\X^{H}(n)\Lam(n)\X(n)$ as $\rho_{\text{max}}(n)$ and $\rho_{\text{min}}(n)$, respectively. 

Assume the quadratic loss function $\mathcal{L}_{n}(\w(n))$ satisfies the RSC condition with parameter $\alpha_{1}$. Then at a particular time index $n_{0}$ that is large enough, with $\gamma$ and $\alpha$ being
$$
\begin{aligned}
    \frac{4\|\X^{H}(n_0)\Lam^{1/2}(n_0)\boldsymbol{\epsilon}(n_0)\|_{\infty}}{\sigma^{2}} \leq & \gamma< \frac{\alpha}{\sigma^{2}}*\frac{\rho_{\text{min}}(n_0)}{\rho_{\text{max}}(n_0)}\\
    \alpha < & \frac{4}{3}\alpha_{1},
\end{aligned}
$$
we have
\begin{equation*}
    \tilde{e}_{\text{MCP}}(n_0) \leq \frac{6\sqrt{\|\w(n_0)\|_{0}}}{ n_0(4\alpha_{1}-3\alpha)}4\|\X^{H}(n_0)\Lam^{1/2}(n_0)\boldsymbol{\epsilon}(n_0)\|_{\infty}
\end{equation*}
if we perform sufficient number of EM iterations at each instant time. 
\end{theorem}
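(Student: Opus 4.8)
The plan is to assemble the claimed bound from the two error estimates already derived in Section~\RN{4}-B by means of the triangle inequality
\begin{equation*}
\tilde{e}_{\text{MCP}}(n_0) = \|\hat{\w}(n_0)-\w(n_0)\|_{2} \le \|\hat{\w}(n_0)-\tilde{\w}(n_0)\|_{2} + \|\tilde{\w}(n_0)-\w(n_0)\|_{2},
\end{equation*}
where $\tilde{\w}(n_0)$ is a stationary point of the cost in Eq.~(\ref{eq:ob_fc_mcp}) (the EM iterates converge to such a point when firm thresholding is used, by the convergence result above). The first summand is the \emph{EM error}, to be controlled through the contraction estimate Eq.~(\ref{EM_update_bound}); the second is the \emph{relaxation error}, to be controlled through the amenable-regularizer bound of \cite{JMLR:v16:loh15a}. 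The substance of the argument is verifying that the two stated windows for $\gamma$ and $\alpha$ are exactly the hypotheses these two ingredients require.

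For the EM error I would first note that the upper bound $\gamma < (\alpha/\sigma^{2})\,\rho_{\text{min}}(n_0)/\rho_{\text{max}}(n_0) \le \alpha/\sigma^{2}$ forces $\sigma^{2}\gamma < \alpha$, i.e.\ the firm-thresholding regime, so Eq.~(\ref{EM_update_bound}) applies with $C = [\alpha/(\alpha-\sigma^{2}\gamma)]\,[1-(\xi^{2}/\sigma^{2})\rho_{\text{min}}(n_0)]$. I would then choose the free parameter $\xi^{2}$ in the interval $\big(\sigma^{4}\gamma/(\alpha\rho_{\text{min}}(n_0)),\ \sigma^{2}/\rho_{\text{max}}(n_0)\big]$, which is nonempty precisely because $\gamma < (\alpha/\sigma^{2})\,\rho_{\text{min}}(n_0)/\rho_{\text{max}}(n_0)$ and whose right endpoint respects the semidefiniteness requirement $\xi^{2}\le\sigma^{2}/\rho_{\text{max}}(n_0)$ noted after Eq.~(\ref{em-m}) (recall $\rho_{\text{max}}(n_0)$ is also the largest eigenvalue of $\Lam^{1/2}(n_0)\X(n_0)\X^{H}(n_0)\Lam^{1/2}(n_0)$). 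A one-line rearrangement shows $C<1$ is equivalent to $\xi^{2}>\sigma^{4}\gamma/(\alpha\rho_{\text{min}}(n_0))$, while $C\ge 0$ follows from $\xi^{2}\le\sigma^{2}/\rho_{\text{max}}(n_0)$; hence $0\le C<1$. Consequently $\|\hat{\w}^{(K)}(n_0)-\tilde{\w}(n_0)\|_{2}\le C^{K}\|\hat{\w}^{(0)}(n_0)-\tilde{\w}(n_0)\|_{2}\to 0$ as $K\to\infty$, so performing enough EM iterations at time $n_0$ makes the EM error as small as desired.

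For the relaxation error I would check the hypotheses of Theorem~1 of \cite{JMLR:v16:loh15a}: the scaled MCP is an amenable regularizer \cite{loh2017statistical}; $\mathcal{L}_{n_0}$ satisfies the RSC condition with constant $\alpha_{1}$ and $\tau_{1}=\tau_{2}=0$ by assumption, since $\nabla^{2}\mathcal{L}_{n_0}(\w)=\tfrac{1}{n_0}\X^{H}(n_0)\Lam(n_0)\X(n_0)\succ 0$; the curvature-versus-concavity condition $3\alpha<4\alpha_{1}$ is exactly $\alpha<\tfrac{4}{3}\alpha_{1}$; and the regularization-strength condition $\sigma^{2}\gamma\ge 4\|\X^{H}(n_0)\Lam^{1/2}(n_0)\boldsymbol{\epsilon}(n_0)\|_{\infty}$ is exactly the left endpoint of the stated $\gamma$-window. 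For $n_0$ large enough this yields $\|\tilde{\w}(n_0)-\w(n_0)\|_{2}\le 6\sigma^{2}\gamma\sqrt{s}\,/\,[n_0(4\alpha_{1}-3\alpha)]$ with $s=\|\w(n_0)\|_{0}$, and taking $\gamma$ at the left endpoint of its range, $\sigma^{2}\gamma=4\|\X^{H}(n_0)\Lam^{1/2}(n_0)\boldsymbol{\epsilon}(n_0)\|_{\infty}$, rewrites this as the bound appearing in the theorem.

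Finally I would substitute both estimates into the triangle inequality and let $K\to\infty$, so the EM error vanishes and only the relaxation term survives, giving the claim. The step I expect to be the real obstacle is not any single estimate but the \emph{consistency of the parameter window}: the EM side demands $\gamma$ small relative to $\alpha\,\rho_{\text{min}}(n_0)/\rho_{\text{max}}(n_0)$ and $\alpha<\tfrac{4}{3}\alpha_{1}$, while the relaxation side demands $\sigma^{2}\gamma$ at least the noise level $4\|\X^{H}(n_0)\Lam^{1/2}(n_0)\boldsymbol{\epsilon}(n_0)\|_{\infty}$; these are simultaneously feasible only when $4\|\X^{H}(n_0)\Lam^{1/2}(n_0)\boldsymbol{\epsilon}(n_0)\|_{\infty}\,\rho_{\text{max}}(n_0)/\rho_{\text{min}}(n_0)<\tfrac{4}{3}\alpha_{1}\sigma^{2}$, which is where the ``$n_0$ large enough'' hypothesis must do its work, and one should also take some care in transplanting the real-valued statement of \cite{JMLR:v16:loh15a} to the complex-valued tap-weight setting used here.
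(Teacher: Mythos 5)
Your proposal is correct and follows essentially the same route as the paper's own proof: the triangle-inequality split into EM error and relaxation error, the contraction bound $C<1$ enforced through the $\gamma$ window together with $\xi^{2}\le\sigma^{2}/\rho_{\text{max}}(n_0)$, and Theorem~1 of \cite{JMLR:v16:loh15a} for the relaxation part. Your explicit choice of $\xi^{2}$ in the interval $\bigl(\sigma^{4}\gamma/(\alpha\rho_{\text{min}}(n_0)),\,\sigma^{2}/\rho_{\text{max}}(n_0)\bigr]$ and your remark on the joint feasibility of the parameter window merely make precise what the paper leaves implicit.
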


\begin{proof}
At a particular time index $n_0$ that is large enough, the estimation error will only consist of EM and relaxation error. 

From the above analysis, we require $\sigma^2 \gamma \geq  4 \| (\X^{H}(n_0)\Lam^{1/2}(n_0)\boldsymbol{\epsilon}(n_0)) \|_{\infty}$ and $3\alpha < 4\alpha_1$ to bound the relaxation error. As for controlling the EM error along EM iterations, we also require $C < 1$ so that the upper bound of EM error can become sufficiently small if a large enough number of EM iterations can be performed at each instant time. Hence,
$$
\begin{aligned}
\left[\alpha/(\alpha-\sigma^2\gamma)\right]\left[1-(\xi^2/\sigma^2)\rho_{\text{min}}(n_0) \right ] < 1 & \\
\sigma^2 \gamma < \alpha \rho_{\text{min}}(n_0) \frac{\xi^2}{\sigma^2} &
\end{aligned}
$$
The requirement of $\xi^2 \leq \sigma^2/\rho_{\text{max}}(n_0)$ further implies
$$
\sigma^2\gamma < \alpha \frac{\rho_{\text{min}}(n_0)}{\rho_{\text{max}}(n_0)}
$$
Combining the above parameter constraints with the conclusion of Theorem 1 in \cite{JMLR:v16:loh15a} completes the proof. 
\end{proof}

Under proper choices of hyperparameters and large enough EM iterations at each time index, Theorem 4.2 shows that we can get sufficiently close to a stationary point whose distance to the true tap-weight vector can be upper bounded. Besides, the above analysis regarding estimation error only consists of EM and relaxation error without involving the variation of the tap-weight vector between two consecutive time indices. 

\section{Sparse RLS with MCP}
The previous two sections proposed and analyzed SPALS- MCP algorithm, which solves Eq. (\ref{eq:ob_fc_mcp}) at a particular time index. In real applications, we mostly deal with streaming data and time-varying tap-weight vectors. This requires us to develop online algorithms for adaptive estimation. The
following Algorithm 2, based on SPALS-MCP, is designed in a recursive manner to estimate the tap-weight vector at every time index. 

\begin{algorithm}
\caption{Sparse Recursive Least Squares with MCP (SPARLS-MCP)}
\SetAlgoLined
    
 \textbf{1. Initialization}: \\
 $\B(1) = \bold{I} - (\xi^2 /\sigma^2)\x(1)\x^{H}(1)$ \\
 $\boldsymbol{\mu}(1) = (\xi^2 /\sigma^2)\x(1)\overbar{d}(1)$ and $\hat{\w}(1) = \textbf{0}$ \\
 ~\\
 \textbf{2. Update}: \\
\textbf{For input $\x(i)$ and $d(i)$ at time index $i \; (2 \leq i\leq n)$ :} \\
 \hspace{3mm} $\B(i) = \lambda \B(i-1) - (\xi^2 /\sigma^2)\bold{x}(i)\x^{H}(i) + (1-\lambda)\bold{I}$; \\
 \hspace{3mm} $\boldsymbol{\mu}(i) = \lambda \boldsymbol{\mu}(i-1) + (\xi^2 /\sigma^2)\x(i)\overbar{d}(i)$; \\
 \hspace{3mm} $\r(i) = \B(i) \hat{\w}^{(k)}(i)+ \boldsymbol{\mu}(i)$; \\
 \hspace{3mm} $\hat{\w}^{(0)}(i) = \hat{\w}(i-1)$; \\
 ~\\ 
 \hspace{3mm} Perform the EM step in Algorithm 1 to obtain $\hat{\w}(i)$\\
 ~\\


 \textbf{3. End}: Return $\hat{\bold{w}}(n)$

\end{algorithm}

Within the loop for time index $i$ in step 2, it estimates $\w(i)$ based on $\hat{\w}(i-1)$, $\x(i)$ and $d(i)$. Notice that we recursively update the intermediate terms $\bold{B}(.)$ and $\boldsymbol{\mu}(.)$ when new signal arrives. This avoids the tedious matrix computation for $\r(.)$ in Eq. (\ref{em-e}). 

To study the convergence and steady-state error of SPARLS-MCP, we need to involve the variation of tap-weight vectors between two consecutive time indices. If the true value of tap-weight vectors do not change too rapidly in consecutive time indices, we expect that the estimate from time index $i-1$ to be a good initialization for the EM algorithm at time index $i$ for faster convergence. Numerical studies presented in Sec. \RN{6} show that our SPARLS-MCP does in fact converge relatively fast. However, mathematical proof of the convergence is beyond the scope of this paper. We anticipate that such a proof requires leveraging the proof techniques in \cite{masse2020convergence}. 


Based on the above practical consideration, all the following experiments generate tap-weight vectors under some given mechanisms, some of which follows a particular random process. These experiments are also originated from real-life scenarios.

\section{Simulation Studies}
In this section, we study simulations on three application scenarios based on which we deal with sparse system identification problem. Consider the channel $d(i) = \w^{H}(i)\x(i) + \epsilon(i) \; (1\leq i \leq n)$ with the aggregate power of each input signal $\x(i)$ being 1 and Gaussian noise variance being $\sigma^2$

Then the signal-noise-ratio (SNR) can be calculated as: $\mathbb{E}( \|\w\|_{2}^{2} )/\sigma^2$. The tracking performance is measured in terms of the normalized mean squared error (NMSE): $\mathbb{E}(\|\hat{\w}-\w\|_{2}^{2})/\mathbb{E}(\|\w\|_{2}^{2})$. We compare the performance of SPARLS-MCP with $\ell_1$-norm regularized RLS (SPARLS-$\ell_1$) and conventional RLS. 

\subsection{Jake's Model}
In Jake's model, the nonzero elements in the tap-weight vector is a sample path of a Rayleigh random process with the autocorrelation between any two time indices being characterized by Bessel function. This model is widely used in wireless communication when the transmitted signal is reflected, diffracted or attenuated by the surrounded buildings in a city center \cite{chizhik2003multiple,jakes1994microwave}. The detailed generation process of any nonzero tap-weight components is as follows: 
\begin{equation*}
    w_{i} =\sqrt{g_{c}(i)^2+g_{s}(i)^2}
\end{equation*}
 $$\text{where }g_{c}(i) = \sqrt{\frac{2}{n}} \sum_{k=1}^{n} \cos \left(A_{k}\right) \text{ and } g_{s}(i) =\sqrt{\frac{2}{n}} \sum_{i=k}^{n} \sin \left(A_{k}\right).$$ $A_{k}=2\pi f_{d} i \cos \alpha_{k}+\phi_{k}$ with i.i.d. samples $\alpha_{k}$ and $\phi_{k}$ from uniform distribution $\text{Uni}(-\pi, \pi)$. In terms of the physical meaning, $\alpha_{k}$ is the angel of the $k^{\text{th}}$ incoming wave and $\phi_{k}$ represents the random initial phase of the $k^{\text{th}}$ propagation path. The generation mechanism results in the autocorrelation function between instant time $n$ and $n+\tau$ of both $g_c(.)$ and $g_s(.)$ being $J_{0}(2\pi \tau f_d)$, where $J_0$ is the zeroth order Bessel function and $f_d$ is the Doppler frequency shift \cite{xiao2002second}.

In terms of the output signal, it is generated through the linear channel: $d(i) = \w^{H}(i)\x(i) + \epsilon(i) \; (1\leq i \leq 1000)$. And in our experiments, we set the length of tap-weight vector to be 100 with 5 components being nonzero. At instant time 501, one of the nonzero weights will be muted and one of the zero weights will be activated.

The tracking performances of different algorithms under 20dB and 30dB SNR are displayed in Fig. \ref{jakes_mse}. Both the sparse version RLS algorithms outperform the conventional RLS significantly. And the SPARLS-MCP has additional 2dB gain over the SPARLS-$\ell_1$.  

\begin{figure}[h!]
    \centering
    \includegraphics[width=8.5cm]{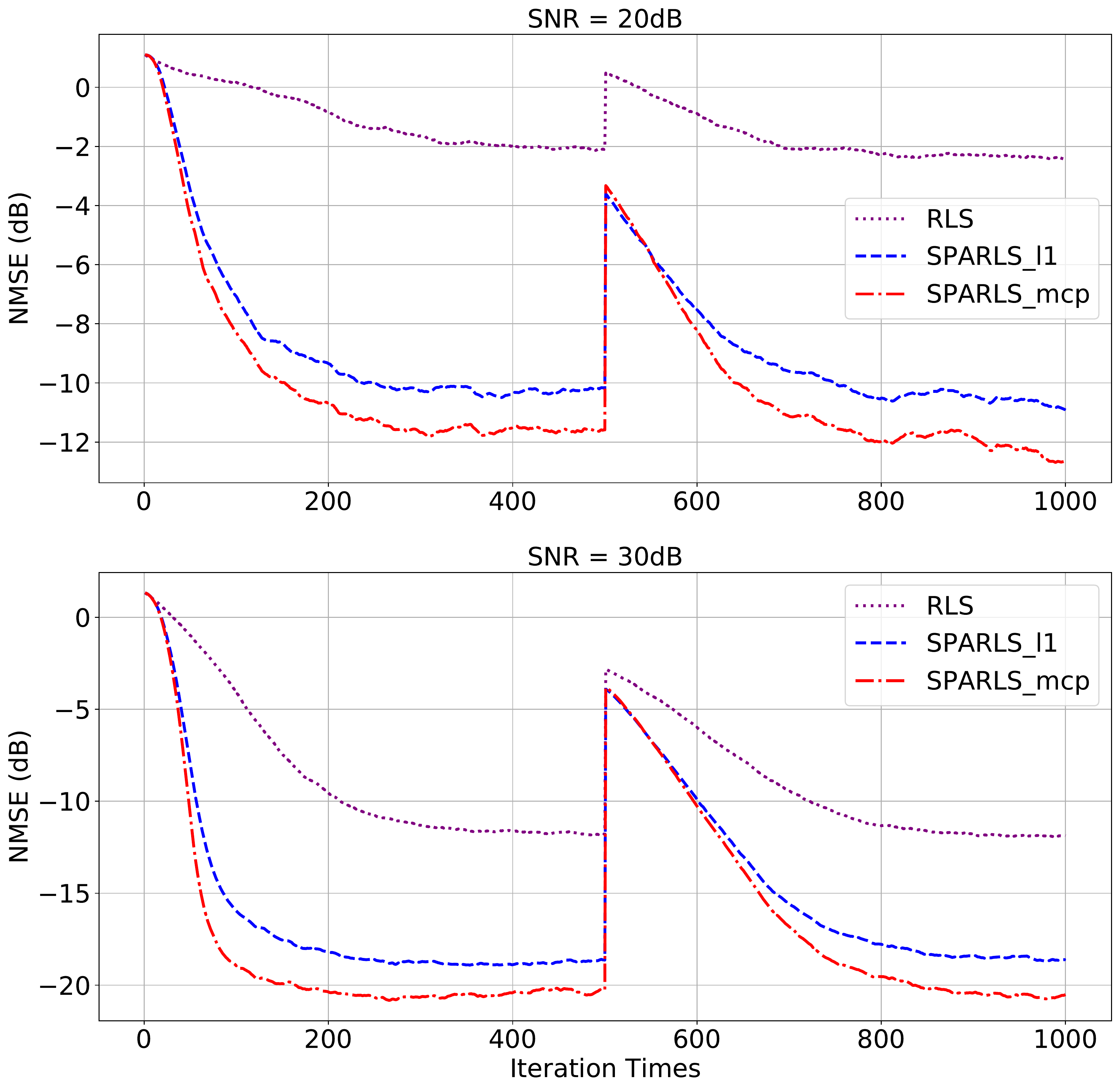}
    \caption{Comparison of tracking performance in Jake's model with $f_{d} = 0.0001$ under SNR = 20dB (upper figure) and 30dB (bottom figure)} 
    \label{jakes_mse}
\end{figure}

\subsection{Volterra System}
As a polynomial representation of a nonlinear system, Volterra series is widely used in signal processing and system identification to characterize the higher order terms in input signals. For instance, in digital satellite communication, the equipped amplifiers on both earth station and satellite repeaters are operating near saturation. Besides, only a restricted portion of bandwidth is available. These characteristics create a nonlinear channel with memory, which can be modeled by Volterra series \cite{benedetto1979modeling}. 

The discrete version of Volterra series is as follows:
\begin{equation*}
    d(i) = \sum_{p=1}^{P}\sum_{\tau_{1} = 0}^{m_p}\cdots \sum_{\tau_{p} = 0}^{m_p}h_{p}(\tau_{1},\cdots,\tau_{p})\prod_{j=1}^{p} x(i-\tau_{j}) + \epsilon(i)
\end{equation*}

\noindent where $P$ is the order of nonlinear terms, $m_p$ is the order of memory and $h_{p}(\tau_{1},\cdots,\tau_{p})$ is the Volterra kernel that needs to be estimated. In band-limited communications, only odd order products of the input signal are incorporated and the nonlinear system can be represented as \cite{kalouptsidis2011adaptive}: 
\begin{equation*}
    \begin{aligned}
    d(i)  = & \sum_{p=0}^{\lfloor \frac{(P-1)}{2} \rfloor}\sum_{\tau_{1} = 0}^{m_{2p+1}}\cdots \sum_{\tau_{2p+1} = 0}^{m_{2p+1}}h_{2p+1}(\tau_{1},\cdots,\tau_{2p+1}) \\
    & \prod_{j=1}^{p+1} x(i-\tau_{j})\prod_{j^{*} = p+2}^{2p+1}\overbar{x}(i-\tau_{j^{*}}) + \epsilon(i)
    \end{aligned}
\end{equation*}

In our experiment, we consider the following sparse third order Volterra system with memory 7 and only odd order products included: 
\begin{equation*}
    \begin{aligned}
    d(i) & = a_{3}x(i-3) + a_{5}x(i-5) + b_{1,4}x^{2}(i-1)\overbar{x}(i-4) \\
    & + b_{5,1}x^{2}(i-5)\overbar{x}(i-1) + \epsilon(i) \;\; (1 \leq i \leq 500)
    \end{aligned}
\end{equation*}

\begin{equation*}
    \begin{aligned}
    d(i) & = a^{'}_{3}x(i-3) + a^{'}_{7}x(i-7) + b^{'}_{1,4}x^{2}(i-1)\overbar{x}(i-4) \\
    & + b^{'}_{6,1}x^{2}(i-6)\overbar{x}(i-1) + \epsilon(i) \;\; (501 \leq i \leq 1000)
    \end{aligned}
\end{equation*}
where the i.i.d input signal $x(i-k) \;(0\leq k \leq 7)$ follows a complex normal distribution $\mathcal{CN}(0,1)$. The coefficients we would like to estimate, $a(.)$, $a^{'}(.)$, $b(.)$ and $b^{'}(.)$, and the i.i.d noises $\epsilon(i)$ are all generated from complex normal $\mathcal{CN}(0,1)$. At instant time 501, both the coefficients and the locations of nonzero tap-weights change. 

Since the memory of the Volterra system is 7, the first order terms incorporate $x(i-k) \;(0\leq k \leq 7)$. And we confine the third order part to be the composition of a square and first order terms, i.e., $x^{2}(i-m)\overbar{x}(i-n) \; (0\leq m \leq 7,\; 0\leq n \leq 7)$. Hence, in our sparse Volterra system, 4 out of 72 tap-weight components are active at every instant time. 

\begin{figure}[h!]
    \centering
    \includegraphics[width=8.5cm]{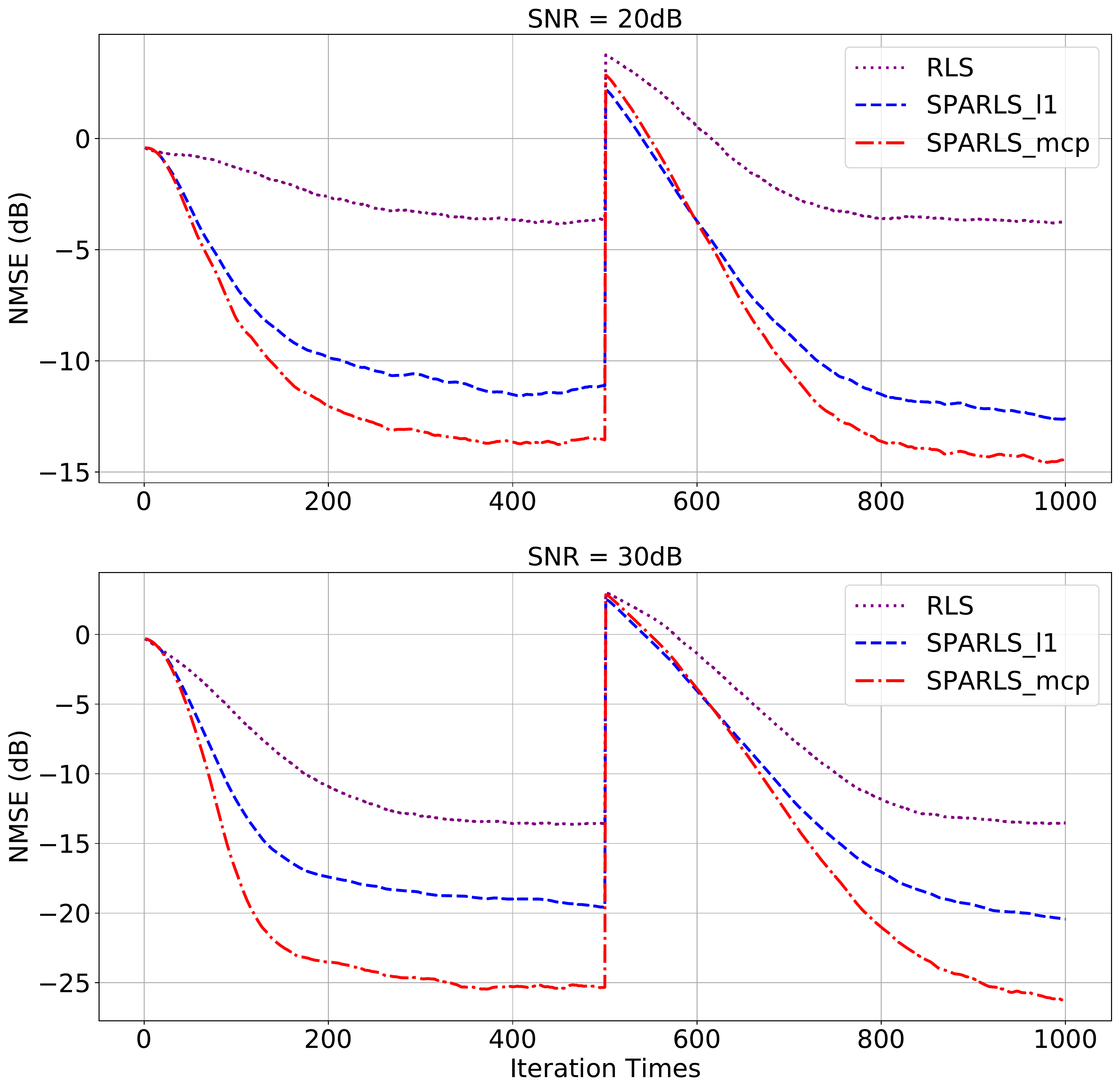}
    \caption{Comparison of tracking performance in Volterra model under SNR = 20dB (upper figure) and 30dB (bottom figure)} 
    \label{volterra_mse}
\end{figure}

The tracking performances of different algorithms under 20dB and 30dB SNR are displayed in Fig. \ref{volterra_mse}. Both the sparse version RLS algorithms outperform the conventional RLS significantly. And the SPARLS-MCP has 2dB to 5dB gain over the SPARLS-$\ell_1$. The performance gain is more obvious if the signal quality is better. 

\subsection{Multivariate Time Series}
In many real-life applications, the quantity of some index at each instant time is determined by several other factors. This is when we need multivariate time series to model the interdependency among different variables. In multivariate time series, we have more than one time-dependent variable where each variable may depend on the past values from both itself and other variables. Multivariate time series is widely used in stock market analysis, inventory study, quality control, etc \cite{heckert2002handbook}.

In our experiments, we consider the following two dimensional multivariate time series with lag 8. The goal is to forecast series $X_{2,t}$ at each instant time. 
$$
\begin{aligned}
& X_{1,t} = \epsilon_{1,t} \\
& X_{2,t} = 0.4X^{2}_{1,t-2} - 0.8X_{1,t-7} + 0.2 \epsilon_{2,t} \; (t = 1,\cdots,1000) \\
& \text{where } \epsilon_{1,t} ,\epsilon_{2,t} \text{ are i.i.d standard Gaussian}
\end{aligned}
$$

Different from the Volterra system in which we know the detailed nonlinearity information, we do not have prior knowledge about the exact forms of nonlinear terms in multivariate time series. Hence, we need to employ spline technique from statistics literature to approximate the unknown nonlinear terms. 

Assume an additive model on the series we need to forecast:
$$
\begin{aligned}
    X_{2,t} & = f(X_{1,t-1},X_{2,t-1},\cdots,X_{1,t-8},X_{2,t-8}) + \epsilon(t) \\
    & = \sum_{i=1}^{8}\left[ f_{i}^{(1)}(X_{1,t-i}) + f_{i}^{(2)}(X_{2,t-i}) \right] + \epsilon(t)
\end{aligned}
$$
with 
\begin{equation*}
    \mathbb{E}(f_{i}^{(1)}(X_{1,t-i})) = \mathbb{E}(f_{i}^{(2)}(X_{2,t-i})) = 0 \;\; (1\leq i \leq 8)
\end{equation*}
For each $f_{i}^{(1)}$ and $f_{i}^{(2)}$, we assume there are $v$ quadratic spline basis, i.e.,
$$
\begin{aligned}
    f_{i}^{(k)}(x) & = \sum_{j=1}^{v}c^{(k)}_{i,j}b^{(k)}_{i,j}(x) \\
    b_{i,j}^{(k)}(x) & = B(x|s_{1},s_{2},\cdots,s_{v-1})
\end{aligned}
$$
where $s_{1},s_{2},\cdots,s_{v-1}$ are equally spaced knots and $c_{i,j}^{(k)}$ are spline coefficients associated with basis function. 
Hence, the weighted mean squared error from instant time 1 to $n$ will be: 
\begin{equation*}
    \sum_{t=1}^{n} \lambda^{n-t}(X_{2,t}-\sum_{k=1}^{2}\sum_{j=1}^{v}\sum_{i = 1}^{8} c^{(k)}_{i,j}b^{(k)}_{i,j}(X_{k,t-i}))^{2} 
\end{equation*}
Considering the fact that $X_{2,t}$ only depends on $X_{1,t-2}$ and $X_{1,t-7}$, $f(.)$ will display sparsity with only a few $f^{(.)}_{i}(.)$ being active. Besides, the spline coefficients associated with $X_{1,t-2}$ and $X_{1,t-7}$ should be nonzero, which further implies a group sparsity pattern among all the spline coefficients. Therefore, we need to incorporate a group sparsity promoting term $\rho$ in our objective function and the problem of interest boils down to minimizing: 
\begin{equation}
\label{eq:ob_fc_mt}
  \frac{1}{2}\| \Lam^{1/2}(n)\d(n)-\Lam^{1/2}(n)\X(n) \w(n)\|_{2}^{2} + \sum_{l=1}^{L}\rho(\|\w_{l}(n)\|_{2},\gamma)
\end{equation}

\noindent where the $t^{\text{th}}$ row of $\X(n)$ is 
$$
\begin{aligned}
    \x(t) = & \: [b^{(1)}_{1,1}(X_{1,t-1}),\cdots,b^{(1)}_{1,v}(X_{1,t-1}),\\
    & \: b^{(2)}_{1,v}(X_{2,t-1}),\cdots,b^{(2)}_{1,v}(X_{2,t-1}), \\
    & \: \cdots \cdots,\\
    & \: b^{(1)}_{8,1}(X_{1,t-8}), \cdots,  b^{(1)}_{8,v}(X_{1,t-8}), \\
    & \: b^{(2)}_{8,1}(X_{2,t-8}), \cdots,  b^{(2)}_{8,v}(X_{2,t-8})],
\end{aligned}
$$
$$
\begin{aligned}
    \Lam^{1/2}(n) = & \: \text{diag}(\sqrt{\lambda^{n-1}},\sqrt{\lambda^{n-2}},\cdots,1), \\ 
    \d(n) = & \: [X_{2,1},X_{2,2},\cdots,X_{2,n}]^{T}, \\ 
    \w(n)  = & \: [ c^{(1)}_{1,1}(n),\cdots,c^{(1)}_{1,v}(n), \\
    & \: c^{(2)}_{1,1}(n),\cdots,c^{(2)}_{1,v}(n), \\
    & \: ,\cdots \cdots, \\
    & \: c^{(1)}_{8,1}(n),\cdots,c^{(1)}_{8,v}(n) \\
    & \: c^{(2)}_{8,1}(n),\cdots,c^{(2)}_{8,v}(n)]^{T}, \\ 
\end{aligned}
$$
and $L = 2\times 8 = 16$.

As for the $\w_{l}(n)$ in $\rho$, it stands for $l^{\text{th}}$ sub-vector of $\w(n)$ corresponding to the spline coefficients of $f_{\lceil l/2 \rceil}^{\tilde{l}}(.)$, where $\lceil . \rceil$ is the ceiling function and $\tilde{l} = (l \mod 2)$ and $\mod$ is the modulo operation. Specifically, 
\begin{equation*}
    \w_{l}(n) = [c_{\lceil l/2 \rceil,1}^{\tilde{l}}(n),\cdots,c_{\lceil l/2 \rceil,v}^{\tilde{l}}(n)]^{T}
\end{equation*}

We compare the performance of using group Lasso and group MCP for sparsity promoting function $\rho$ with number of quadratic spline basis $v = 10$. Fig. \ref{spline_mt} shows the estimation of spline coefficients by group lasso and group MCP, respectively. It also shows the convergence of those coefficients with the nonzero lines reaching a plateau at the end of iterations. Besides, two groups of nonzero spline coefficients are plotted in blue and orange. 

Note that the spline technique is only used for approximation purpose and there is no ground truth tap-weight vector that can be compared with. In terms of the performance comparison, we can only compare the mean prediction error, as shown in Table \ref{fitting errors}. Since the algorithm is still in the beginning stage of fitting and may not produce nearly convergent estimates during the first few hundreds iterations, only the prediction errors after instant time 400 are averaged. We can observe that both the mean and standard deviation of the prediction errors are smaller if we use group MCP regularizer. 

\begin{figure}[h!]
    \centering
    \includegraphics[width=7cm]{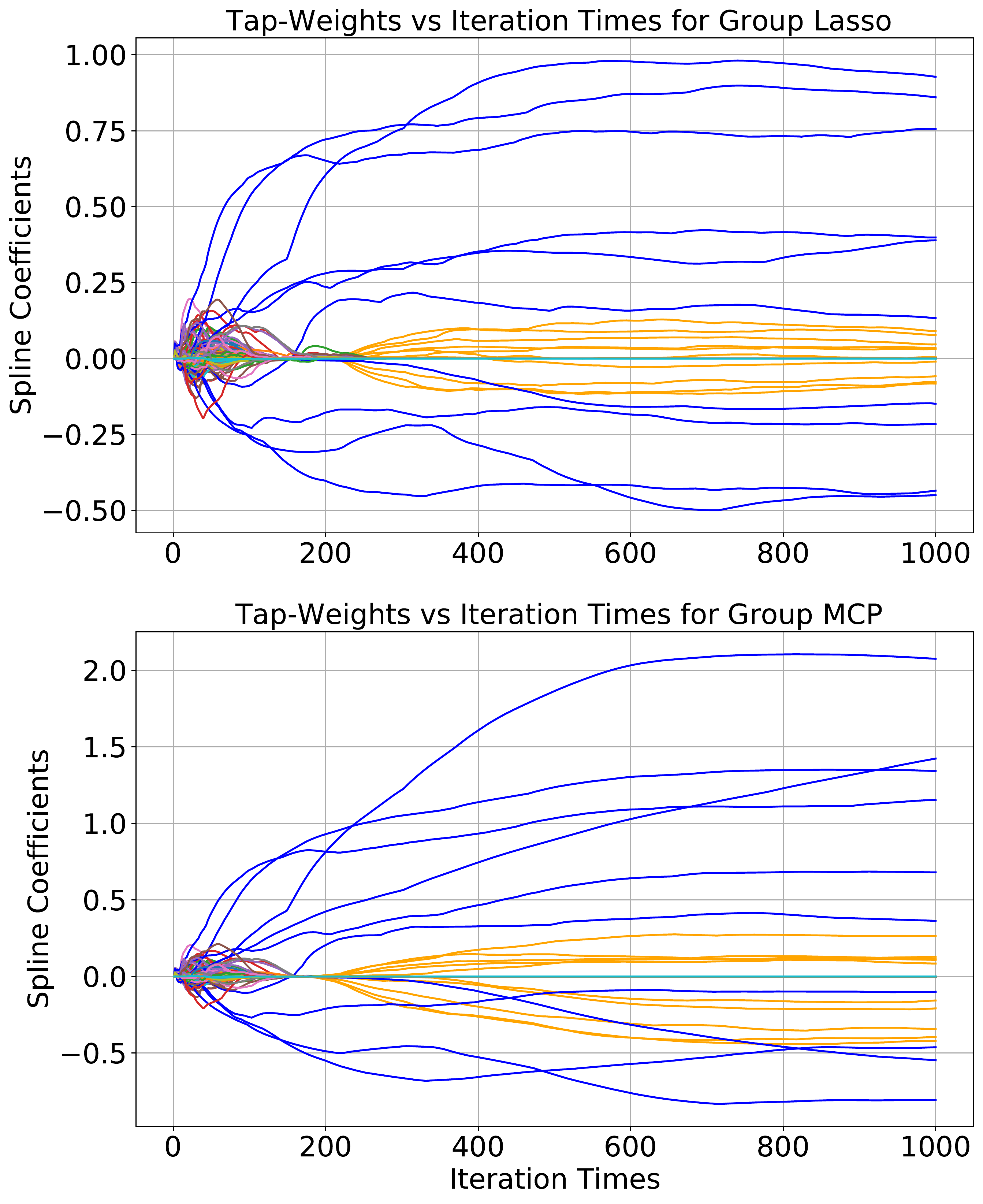}
    \caption{Spline Coefficients vs Iteration Times} 
    \label{spline_mt}
\end{figure}

\begin{table}[h!]
    \centering
    \begin{tabular}{||c c c c||} 
     \hline
      & \thead{Mean} & \thead{Std} & \thead{Quantiles (2.5$\%$, 97.5$\%$)}\\ [0.5ex] 
     \hline\hline
     Group Lasso &  \thead{0.121} & \thead{0.767} & \thead{(-0.986, 2.082)} \\ 
     Group MCP & \thead{0.040} & \thead{0.656} & \thead{(-0.693, 2.153)}\\ [1ex] 
     \hline
    \end{tabular}
    \caption{Comparison of prediction error - summary statistics (from time index 400 to 1000) }
    \label{fitting errors}
\end{table}

\subsection{Parameter Tuning}
Among all the algorithms we implement, there are three parameters that play important roles in determining our tracking performance, which are forgetting factor $\lambda$, penalization level $\gamma$ and Moreau envelope height $\alpha$.

For RLS, both theoretical analysis and simulation studies suggest that $\lambda = 1$ yields the best performance in time-invariant systems. However, in our experimental settings from Section.\RN{5}-A and \RN{5}-B, the locations of nonzero tap-weights change at some instant time and we do not need to memorize the previous nonzero locations after the change. Hence, we set $\lambda = 0.99$ to dampen the influence of signals from long time ago while preserving the best performance to the greatest extent. 

For the sake of a fair comparison, we set $\lambda = 0.99$ in both SPARLS-$\ell_1$ and SPARLS-MCP algorithm. In SPARLS-$\ell_1$, we perform a grid search on $\gamma$ under different SNR. And in SPARLS-MCP, we use the same $\gamma$ as its $\ell_1$ counterpart and perform grid search on $\alpha$. These grid searches were performed on a training dataset with known $\w(n)$ but the performance curves shown in this section have been obtained on a different dataset from the same statistical model. In spite of the fact that we are likely to end up with partially optimal choices of these parameters, there is little need for extra efforts as long as the superiority of newly proposed algorithm is demonstrated. In terms of the detailed information, the parameter choices are shown in Table \ref{paras}. 

\begin{table}[h!]
    \centering
    \begin{tabular}{|c c c|} 
    \hline
    & \thead{SNR = 20dB} & \thead{SNR = 30dB} \\ 
    \hline\hline
    Jake's Model &  $\thead{\lambda = 0.99, \gamma = 10 \\ \alpha = 0.5}$ & $\thead{\lambda = 0.99, \gamma = 30 \\ \alpha = 0.5}$  \\ 
    Volterra System & $\thead{\lambda = 0.99, \gamma = 1 \\ \alpha = 0.5}$ & $\thead{\lambda = 0.99, \gamma = 5 \\ \alpha = 0.5}$ \\ 
    \hline
    \end{tabular}
    \caption{Parameter choices of forgetting factor $\lambda$, penalization level $\gamma$ and Moreau envelope height $\alpha$ in Section.\RN{5}-A and \RN{5}-B}
    \label{paras}
\end{table}

\section{Conclusion}
We developed a SPARLS-MCP algorithm for estimation of sparse tap-weight vectors in an online adaptive system. This algorithm uses EM update to recursively estimate the tap-weight vector based on streaming data and noisy observations. We presented theoretical results regarding the convergence and estimation error of the static least squares version of our algorithm. We studied the performance of the recursive version numerically. Simulation studies under various application scenarios show better performance for SPARLS-MCP compared to SPARLS-$\ell_1$ and conventional RLS in terms of MSE. Potential directions for future work include developing recursive extensions to other MCP-regularized sparse recovery methods, such as the ADMM-MCP of \cite{wang2018admm}, and comparing the performance of such algorithms for tracking time-varying sparse models. Another possibility for future work is to develop a convergence proof for Algorithm 2, possibly using ideas from \cite{masse2020convergence}. 


\begin{appendices}

\section{Derivation of Proximal Operator of Group MCP}
\noindent Consider the minimization of:
\begin{equation*}
\mathcal{L}(\w) := \frac{1}{2\xi^2}\|\r-\w\|_{2}^{2} + \gamma\sum_{l=1}^{L}\rho_{\alpha}(\|\w_{l}\|_{2})
\end{equation*}
where 
\begin{equation*}
\rho_{\alpha}(\|\w_{l}\|_{2}) = \begin{cases}\|\w_{l}\|_{2}-\frac{1}{2 \alpha} \|\w_{l}\|_{2}^{2}, & \;\|\w_{l}\|_{2} \leq \alpha \\ 
    \frac{1}{2} \alpha, & \text { o.w. }
\end{cases} \;\;(1 \leq l \leq L)
\end{equation*}
and $\w^{T} = (\w_{1}^{T},\w_{2}^{T},\cdots,\w_{L}^{T})$ with $\w_{l} \in \mathbb{R}^{p_{l}} \;(1\leq l \leq L)$.  \\
~\\
\noindent Denote $\bold{I}_{l}$ as the $l^{\text{th}}$ column block of identity matrix $\bold{I}$ corresponding to the $l^{\text{th}}$ sub-vector in $\w$. Then, the objective can be rewritten as:  
\begin{equation*}
\mathcal{L}(\w) := \frac{1}{2\xi^2}\|\sum_{l=1}^{L}\bold{I}_{l}(\r_{l}-\w_{l})\|_{2}^{2} + \gamma\sum_{l=1}^{L}\rho_{\alpha}(\|\w_{l}\|_{2})
\end{equation*}
Similar to $\w_{l}$, $\r_{l}$ represents the $l^{\text{th}}$ sub-vector in $\r$. \\
~\\
\noindent (a) When $\|\w_{l}\|_{2} > \alpha$, setting $\partial \mathcal{L}/ \partial \w_{l} = 0$ yields: 
$$
\begin{aligned}
    \frac{1}{\xi^{2}}\bold{I}_{l}^{T}(\r - \w) & = 0 \\
    \w_{l} & = \r_{l}  
\end{aligned}
$$
Hence, $\|\r_{l}\|_{2} = \|\w_{l}\|_{2} > \alpha$. 

~\\
\noindent (b) When $0 < \|\w_{l}\|_{2} \leq \alpha$, setting $\partial \mathcal{L}/ \partial \w_{l} = 0$ yields: 
$$
\begin{aligned}
    \frac{1}{\xi^{2}}(\r_{l}-\w_{l}) - \gamma \frac{\w_{l}}{\|\w_{l}\|_{2}} + \frac{\gamma}{\alpha}\|\w_{l}\|_{2}\frac{\w_{l}}{\|\w_{l}\|_{2}} & = 0 \\
    (\frac{1}{\xi^{2}} + \frac{\gamma}{\|\w_{l}\|_{2}}-\frac{\gamma}{\alpha}) \w_{l} & = \frac{1}{\sigma^{2}}\r_{l} \;\; (*)
\end{aligned}
$$

\noindent Taking norm on both sides of Eq.$(*)$ helps us obtain $\|\w_{l}\|_{2}$:
$$
\begin{aligned}
    (\frac{1}{\xi^{2}} + \frac{\gamma}{\|\w_{l}\|_{2}}-\frac{\gamma}{\alpha})  \|\w_{l}\|_{2} & = \frac{1}{\xi^{2}}\|\r_{l}\|_{2}  \\
    \|\w_{l}\|_{2} = (\frac{1}{\xi^{2}}\|\r_{l}\|_{2}-\gamma)(\frac{1}{\xi^{2}}- \frac{\gamma}{\alpha})^{-1} & = \frac{\alpha(\|\r_{l}\|_{2}-\gamma\xi^{2})}{\alpha-\gamma\xi^2}
\end{aligned}
$$

\noindent Notice that by requiring $ 0< \|\w_{l}\|_{2} \leq \alpha$, we have: 
\begin{equation*}
    \gamma\xi^{2} < \|\r_{l}\|_{2} \leq \alpha
\end{equation*}

\noindent Finally, substituting $\|\w_{l}\|_{2}$ back into Eq.$(*)$ yields: 
\begin{equation*}
    \w_{l} = \frac{\alpha}{\alpha-\gamma\xi^2}(1-\frac{\gamma\xi^2}{\|\r_{l}\|_{2}})\r_{l}
\end{equation*}

~\\
\noindent (c) When $\w_{l} = \boldsymbol{0}$, we need to have $\boldsymbol{0} \in \partial \mathcal{L}(\w_{l})|_{\w_{l} = \boldsymbol{0}}$, where \; $\partial \mathcal{L}(\w_{l})|_{\w_{l} = \boldsymbol{0}}$ is a subgradient of $\mathcal{L}(\w)$ with respect to $\w_{l}$ evaluated at $\w_{l} = \boldsymbol{0}$. 
\begin{equation*}
    \partial \mathcal{L}(\w)|_{\w_{l} = \boldsymbol{0}} = -\frac{1}{\xi^2}\bold{I}_{l}^{T}\r + \gamma \bold{v}, \;  \text{where } \{\bold{v} \in\mathbb{R}^{p_l} \;| \; \|\bold{v}\|_{2} \leq 1\}
\end{equation*}
Hence, by letting $\boldsymbol{0} = -\frac{1}{\xi^2}\bold{I}_{l}^{T}\r + \gamma \bold{v}$, we obtain 
\begin{equation*}
    \r_{l} = \gamma\xi^{2}\bold{v}
\end{equation*}
Since $\|\bold{v}\|_{2} \leq 1$, we have
\begin{equation*}
    \|\r_{l}\|_{2} \leq \gamma\xi^{2}
\end{equation*}

~\\
Aggregating the results in (a), (b) and (c) leads to:
\begin{equation*}
\label{eq:prox_group_mcp}
    \hat{\w}_{l} = \left\{
    \begin{array}{lll}
        \boldsymbol{0}, &\|\r_{l}\|_{2} \leq \gamma\xi^2 \\ 
        \frac{\alpha}{\alpha-\gamma\xi^2}(1-\frac{\gamma\xi^2}{\|\r_{l}\|_{2}})\r_{l},   & \gamma\xi^2 < \|\r_l\|_{2} \leq \alpha \\
        \r_{l},   & \|\r_{l}\|_{2} > \alpha
    \end{array}\right. \; (1\leq l \leq L)
\end{equation*}

\end{appendices}

\bibliographystyle{IEEEtran}
\bibliography{tsp}



\end{document}